\DeclareMathOperator{\diag}{diag}
\DeclareMathOperator{\sgn}{sgn}
\newcommand{\C}{C}
\newcommand{\tr}{{\rm tr}}
\newcommand{\defeq}{\overset{\Delta}{=}}
\newtheorem{lemma}{Lemma}
\newtheorem{theorem}{Theorem}
\begin{document}
\bibliographystyle{IEEEtran}

\title{A Bayesian Framework for Collaborative Multi-Source Signal Detection}

\author{Romain~Couillet$^\dagger$,~\IEEEmembership{Student Member,~IEEE}\thanks{$^\dagger$ST-Ericsson - Sophia Antipolis, 635 Route des Lucioles, 06560 Sophia Antipolis, romain.couillet@supelec.fr}, M{\'e}rouane~Debbah$^\ast$,~\IEEEmembership{Senior Member,~IEEE}\thanks{$^\ast$Chair Alcatel Lucent, Sup\'elec - 3 rue Joliot Curie, 91192 Gif sur Yvette, merouane.debbah@supelec.fr. Debbah's work is partially supported by the European Commission in the framework of the FP7 Network of Excellence in Wireless Communications NEWCOM++.}
}

\maketitle

\begin{abstract}
This paper introduces a Bayesian framework to detect multiple signals embedded in noisy observations from a sensor array. For various states of knowledge on the communication channel and the noise at the receiving sensors, a marginalization procedure based on recent tools of finite random matrix theory, in conjunction with the maximum entropy principle, is used to compute the hypothesis selection criterion. Quite remarkably, explicit expressions for the Bayesian detector are derived which enable to decide on the presence of signal sources in a noisy wireless environment. The proposed Bayesian detector is shown to outperform the classical power detector when the noise power is known and provides very good performance for limited knowledge on the noise power. Simulations corroborate the theoretical results and quantify the gain achieved using the proposed Bayesian framework.
\end{abstract}

\section{Introduction}
Since a few years, the idea of smart receiving devices has made its way through the general framework of cognitive radio \cite{MIT99}. The general idea of an ideal cognitive receiver is a device that is capable of inferring on any information it is given to discover by itself the surrounding environment \cite{HAY05}. Such a device should be first able to turn prior information on the transmission channel into a mathematically tractable form. This allows then the terminal to take optimal instantaneous decisions in terms of information to feed back, bandwidth to occupy, transmission power to use etc. It should also be capable of updating its knowledge to continuously adapt to the dynamics of the environment. This vision of a cognitive radio is compliant with Haykin's definition of ``brain empowered'' wireless devices \cite{HAY05}.

In particular, one of the key features of cognitive receivers is their ability to {\it sense} free spectrum. Indeed, when the cognitive device is switched on, its prior knowledge is very limited but still it is requited to {\it decide} whether it receives informative data
or pure noise due to interfering background electromagnetic fields, on different frequency bands: this will be further referred to as the \textit{signal detection} procedure.

In the single-input single-output antenna (SISO) scenario, the study of Bayesian signal detectors dates back to the work of Urkowitz \cite{URK67} on additive white Gaussian noise (AWGN) channels. It was later extended to more realistic channel models \cite{KOS02}-\cite{DIG03}. Urkowitz's signal detector is optimal in the sense that his process performs the maximum \textit{correct detection rate}, i.e. the odds for an informative signal to be detected as such, for a given low \textit{false alarm rate}, i.e. the odds for a pure noise input to be wrongly declared an informative signal. To the authors' knowledge, the multi-antenna (MIMO) extension has not been studied, because of the almost prohibitive mathematical complexity of the problem. The usual power detection technique from Urkowitz was then adapted to the MIMO scenario, e.g. \cite{POO08}. The latter consists in summing up all the individual powers received at each antenna and declaring that the incoming signal carries {\it information}\footnote{in order not to confuse the notions of received signal and transmitted signal other than white noise, the hypothetically transmitted signal will be referred to as the {\it information} or {\it informative signal}.} if the total received power exceeds a given threshold; otherwise the received signal is declared pure noise. Therefore this technique does not capitalize on the knowledge of whatever prior information the receiver might be aware of, apart from an approximate estimation of the signal to noise ratio (SNR) required to preset the decision threshold. 

This raises the interest for new techniques such as cooperative spectrum sensing using multiple antennas \cite{POO08}. Those techniques propose to improve the signal detection method of Urkowitz by using extra system dimensions (space dimension through cooperation among terminals for example). Unfortunately, the approaches used are highly dependent on the initial assumptions made and have led to many different contributions. For instance, some insightful work emerged which uses eigen-spectrum or subspace analysis of the received sampled signals \cite{CAR08}-\cite{ZEN08}. Those might provide interesting results in their simplicity and their limited need for prior system knowledge. However, those studies usually consider static knowledge at the receiver and do not cope with the fact that very limited information is indeed provided to the sensing device; this static knowledge often includes the {\it a priori} exact or approximative knowledge of the SNR.\footnote{note that even the ratio $\lambda_{\rm max}/\lambda_{\rm min}$ of the extreme eigenvalues of the Gram matrix of the received signal matrix, which is independent of the SNR when the latter is asymptotically large, is in practical finite-dimensional applications strongly dependent on the SNR.} As an answer to the challenging problem of {\it a priori} limited channel and noise information, \cite{pascal} proposes a sub-optimal technique based on the generalized likelihood ratio threshold (GLRT); this threshold is shown to merely consist of the ratio between the largest eigenvalue and the trace of the receive empirical covariance matrix and provides a significant improvement compared to the classical detection method based on the conditioning number of that matrix. It is also worth mentioning that some contributions treat the problem of estimating the number of transmitting sources embedded in noise, e.g. \cite{wax}, \cite{hero}, which is more specific than our more general ``signal against noise'' decision criterion; this problem can in fact be treated using the results of this paper, with no need for a complete problem redefinition (see Section \ref{sec:discussion}). 

In this work, we introduce a general Bayesian framework for signal detection, which is consistent with the receiver's state of knowledge on the environment, i.e. which produces a unique transmission model for each prior state of knowledge; this knowledge being a list of statistical constraints on the transmission environment. The methodology relies more precisely on the work of Jaynes \cite{JAY03} on Bayesian probabilities and especially on the maximum entropy principle. This principle shall allow us to provide a {\it consistent} detection criterion for each set of prior knowledge at the receiver; here the term {\it consistent} must be understood in the sense that any alternative method would not be fully compliant (or maximally honest in the own terms of Jaynes) with the prior information at the receiver. 

This paper is structured as follows: In Section \ref{sec:results} we introduce the scope and the main results of this paper. In Section \ref{sec:model} we formulate the signal detection model. Then in Section \ref{sec:sigdet}, the Bayesian signal detectors are computed for different levels of knowledge on the system model.  Simulations are then presented in Section \ref{sec:simu}. Finally, after a short discussion in Section \ref{sec:discussion} on the general framework and its limitations, we provide our conclusions.

{\it Notations}: In the following, boldface lowercase and uppercase characters are used for vectors and matrices, respectively. We note $(\cdot)^{\sf H}$ the Hermitian transpose, $\tr(\cdot)$ denotes the matrix trace. $\mathcal M(\mathcal A,N,M)$ is the set of matrices of size $N\times M$ over the algebra $\mathcal A$. $\mathcal U(N)$ is the set of unitary square matrices of size $N$. The notation $P_X(Y)$ denotes the probability density function of the variable $X$ evaluated in the vicinity of $Y$. The notation $(x)_+$ equals $x$ if $x>0$ and $0$ otherwise.

\section{Main Results}
\label{sec:results}
The main purpose of this work is to propose a universal framework to signal detection, based on cogent information at the multi-antenna sensing device. That is, the prior information at the receiver, before channel sensing, will be summarized as a set of statistical information about the transmission environment. We will consider here the situation when the additive noise power is known perfectly or known to belong to a bounded interval, and the situation when the number of signal sources is either known or known to be less than a maximum.

This information will be translated into its corresponding {\it most appropriate} mathematical model, in the sense that this model is (i) compliant with the prior information at the receiver and (ii) avoid enforcing empirical (therefore unknown) properties. This can be realized thanks to the maximum entropy principle.

The decision criterion will then be based on the ratio

\begin{equation}
  C_{{\bf Y}|I}({\bf Y}) = \frac{P_{\mathcal H_1|{\bf Y},I}}{P_{\mathcal H_0|{\bf Y},I}}
\end{equation}
between the hypothesis $\mathcal H_1$ of the received data $\bf Y$ containing an information signal and the hypothesis $\mathcal H_0$ of its containing only noise, when the prior information $I$ (and then its associated maximum entropic model) is considered. Evaluating $C_{ {\bf Y}|I}$ when the system parameters are multi-dimensional could be done by numerical approximation of the underlying integral formulas, but then the `curse of dimensionality' quickly arises and results become very inaccurate, already for little dimensions. This obliges one to explicitly compute integrals of (possibly large) matrix parameters, which is performed here thanks to latest advances in the field of finite-dimension random matrix theory.

The main results of this work are summarized in two theorems and some simulation-based observations. First we consider the situation when the multi-antenna sensing device is entitled to decide on the presence of a single transmitting signal source on a narrow-band channel, assumed static for a given (possibly short) sensing period. This situation is referred to as the SIMO scenario. In theorem \ref{th:1}, a closed-form expression of $C_{{\bf Y}|I}$ is derived when the maximum entropic model attached to $I$ imposes independent Gaussian channel, signal and noise entries; this expression turns out to be solely dependent on the eigenvalues of the matrix ${\bf YY}^{\sf H}$, but in a more involved form than the classical power detector. Monte Carlo simulations, supposing an accurate evaluation of the transmission model, will in fact show a large detection gain of the novel Bayesian detector compared to the classical energy detector. For systems which tolerate very low false alarm decisions, e.g. in a cognitive radio setup when secondary users are banned to transmit in bands in use by primary users, the detection gain of our framework are in particular observed to be as large as $10\%$ (and possibly more if more numerous simulations are run).

Theorem \ref{th:2} generalizes theorem \ref{th:1} to the scenario when multiple sources are transmitting and their exact number is known to the sensing receiver. This is therefore referred to as the MIMO scenario. Simulations reveal less accurate decision capabilities in this scenario compared to SIMO. This is interpreted as a consequence of the increased number of variables whose joint distribution is less constrained in the MIMO case than in the SIMO case; this makes a specific realization of $\bf Y$ more difficult to fit to the underlying MIMO channel model. The MIMO decision is still more efficient than that of the energy detector but now the gap between both closes in.

The most interesting feature of our Bayesian signal detection framework appears when we treat the problem of imprecise knowledge about the SNR. In most scientific contributions about signal detection techniques, one assumes perfect or close-to-perfect knowledge about the noise power, which is obviously inaccessible to the sensing device, whose objective is precisely to separate pure noise from informative signals. We derive in this work a formula for evaluating the presence of an informative signal when the knowledge about the noise power may be very limited. It turns out that, even when the {\it a priori} noise distribution spans uniformly from an (unrealistically) low value to an (unrealistically) large value, simulations results suggest that the proposed signal detector performs remarkably well. This is in sharp contrast with classical methods, such as the energy detector, subspace-based methods or extreme eigenvalue based methods, for which the information about the noise level is required to some extent. This also allows us to go further than in \cite{pascal}, where the GLRT technique was used to cope with unknown channel (modelled as a matrix $\bf H$) and noise information (gathered into the noise power $\sigma^2$); the decision ratio targeted for the GLRT consists in the ratio between $\sup_{{\bf H},\sigma^2}P_{\mathcal H_1|{\bf Y},I}$ and $\sup_{{\bf H},\sigma^2}P_{\mathcal H_1|{\bf Y},I}$; this is in particular inaccurate when the {\it a priori} distribution for the random matrix $\bf H$ is `broad' around the effective value of the channel matrix. Finally, trivial applications of Bayes' rule enable the extension of the current signal detection problem to a wider range of detection issues, such as the evaluation of the number of transmitting sources.

\section{Signal Model}
\label{sec:model}
We consider a simple communication system composed of $M$ transmitter sources, e.g. this can either be an $M$-antenna single transmitter or $M$ single antenna (not necessarily uncorrelated) transmitters, and a receiver composed of $N$ sensors, be they the (uncorrelated) antennas of a single terminal or a mesh of scattered sensors. To enhance the multiple-antenna (MIMO) analogy model, the joint set of sources and the joint set of sensors will often be referred to as {\it the transmitter} and {\it the receiver}, respectively. The transmission channel between the transmitter and the receiver is modelled by the matrix ${\bf H}\in \mathbb C^{N\times M}$, with entries $h_{ij}$, $1\leq i\leq N$, $1\leq j\leq M$. If, at time $l$, the transmitter emits data, those are denoted by the $M$-dimensional vector ${\bf s}^{(l)}=(s_1^{(l)},\ldots,s_M^{(l)})^{\sf T}\in \mathbb C^M$. The additive white Gaussian noise at the receiver is modelled, at time $l$, as the $N$-dimensional vector $\sigma{\bf \theta}^{(l)}=\sigma(\theta_1^{(l)},\ldots,\theta_N^{(l)})^{\sf T}\in \mathbb C^N$, with $\sigma^2$ the variance of the noise vector entries. For simplicity in the following derivations, we shall consider unit variance of the entries of both ${\bf \theta}^{(l)}$ and ${\bf s}^{(l)}$, i.e. ${\rm E}[|\theta^{(l)}_i|^2]=1$, ${\rm E}[|s^{(l)}_i|^2]=1$. We then denote ${\bf y}^{(l)}=(y_1^{(l)},\ldots,y_N^{(l)})^{\sf T}$ the $N$-dimensional data received at time $l$. Assuming the channel coherence time is at least as long as $L$ sampling periods, we finally denote ${\bf Y}=({\bf y}^{(1)},\ldots,{\bf y}^{(L)})\in \mathbb C^{N\times L}$ the matrix of the concatenated receive vectors. This scenario is depicted in Figure \ref{fig:sensors}.

\begin{figure}[t]
  \centering
\includegraphics[]{./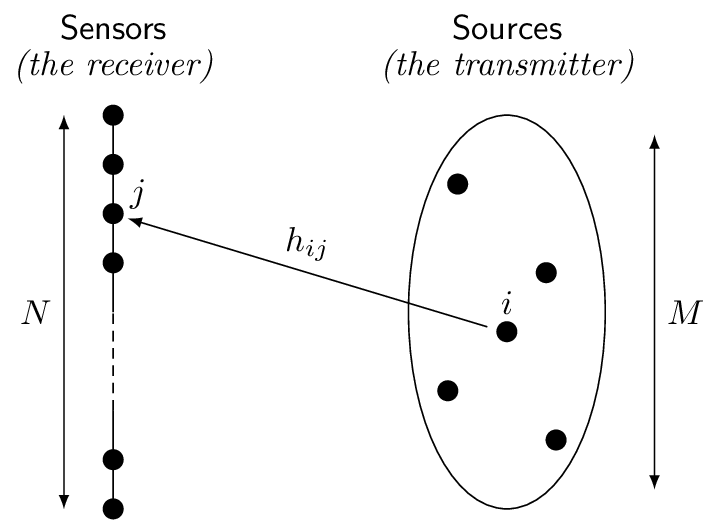}
  \caption{MIMO signal detection configuration}
  \label{fig:sensors}
\end{figure}

Depending on whether the transmitter emits signals, we consider the following hypotheses
\begin{itemize}
 \item $\mathcal H_0$. Only background noise is received.
 \item $\mathcal H_1$. Informative signals plus background noise are received.
\end{itemize}

Therefore, under condition $\mathcal H_0$, we have the model,

\begin{equation}
\label{eq:H0}
{\bf Y} 
= \sigma \begin{pmatrix} \theta^{(1)}_1 & \cdots & \theta^{(L)}_1 \\ \vdots & \ddots & \vdots \\ \theta^{(l)}_N & \cdots & \theta^{(L)}_N \end{pmatrix}
\end{equation}
and under condition $\mathcal H_1$,

\begin{equation}
\label{eq:H1}
{\bf Y} =
\begin{pmatrix}
h_{11} & \ldots & h_{1M} & \sigma & \cdots & 0 \\
\vdots & \vdots & \vdots & \vdots & \ddots & \vdots \\
h_{N1} & \ldots & h_{NM} & 0 & \cdots & \sigma
\end{pmatrix}
\begin{pmatrix}
s^{(1)}_1 & \cdots & \cdots &  s^{(L)}_1 \\
\vdots & \vdots & \vdots & \vdots \\
s^{(1)}_M & \cdots & \cdots &  s^{(L)}_M \\
\theta^{(1)}_1 & \cdots & \cdots & \theta^{(L)}_1 \\
\vdots & \vdots & \vdots & \vdots \\
\theta^{(1)}_N & \cdots & \cdots & \theta^{(L)}_N
\end{pmatrix}
\end{equation}

Under this hypothesis, we further denote $\bf \Sigma$ the covariance matrix of ${\bf YY}^{\sf H}$,

\begin{align}
\label{eq:Sigma}
{\bf \Sigma} &= {\rm E}[{\bf Y}{\bf Y}^{\sf H}] \\ &= L \left({\bf H}{\bf H}^{\sf H}+\sigma^2{\bf I}_{N} \right)\\
\label{eq:ULU} &= {\bf U}\left(L \bf{\Lambda}\right){\bf U}^{\sf H}
\end{align}
where $\bf \Lambda = \diag\left(\nu_1+\sigma^2,\ldots,\nu_N+\sigma^2 \right)$, with $\left\{\nu_i,i\in \{1,\ldots,N\}\right\}$ the eigenvalues of ${\bf HH}^{\sf H}$ and $\bf U$ a certain unitary matrix. 

The receiver is entitled to decide whether the base station is transmitting informative signals or not; this is, the receiver makes a decision over hypothesis $\mathcal H_0$ or $\mathcal H_1$. The receiver is however considered to have very limited information about the transmission channel and is in particular not necessarily aware of the exact number $M$ of sources and of the signal-to-noise ratio. For this reason, the maximum entropy principle requires that all unknown variables be assigned a probability distribution which is both (i) consistent with the prior information (voluntarily discarding information violates the Bayesian philosophy) and (ii) has maximal entropy over the set of densities that validate (i). It is known in particular that, upon the unique constraint of its covariance matrix, the entropy maximizing joint distribution of a given vector is central Gaussian. Therefore, if ${\bf H}$ is only known to satisfy, as is often the case in the short term, ${\rm E}[{\rm tr}{\bf HH}^{\sf H}]=1$, the maximum entropy principle states that the $h_{ij}$'s are independent and distributed as $h_{ij}\sim \mathcal{CN}(0,1/M)$. For the same reason, both noise $\theta^{(l)}_{i}$ and signal $s^{(l)}_i$ variables are taken as independent central Gaussian with variance ${\rm E}[|\theta^{(l)}_{i}|^2]=1$, ${\rm E}[|s^{(l)}_{i}|^2]=1$.

The decision criterion for the receiver to establish whether an informative signal was transmitted is based on the ratio $C$,

\begin{equation}
C({\bf Y})=\frac{P_{\mathcal H_1|{\bf Y}}({\bf Y})}{P_{\mathcal H_0|{\bf Y}}({\bf Y})}
\end{equation}

Thanks to Bayes' rule, this is

\begin{equation}
C({\bf Y})=\frac{P_{\mathcal H_1}\cdot P_{{\bf Y}|\mathcal H_1}({\bf Y})}{P_{\mathcal H_0}\cdot P_{{\bf Y}|\mathcal H_0}({\bf Y})}
\end{equation}
with $P_{\mathcal H_i}$ the {\it a priori} probability for hypothesis $\mathcal H_i$ to be true. We suppose that no side information allows the receiver to think $\mathcal H_1$ is more or less probable than $\mathcal H_0$, and therefore set $P_{\mathcal H_1}=P_{\mathcal H_0}=\frac{1}{2}$ (this again validates the maximum entropy principle), and then

\begin{equation}
C({\bf Y})=\frac{P_{{\bf Y}|\mathcal H_1}({\bf Y})}{P_{{\bf Y}|\mathcal H_0}({\bf Y})}
\end{equation}
reduces to a maximum likelihood criterion.

In the next section, we will derive close-form expressions for $C({\bf Y})$ under the hypotheses that the values of $M$ and the SNR are either perfectly or only partially known at the receiver.

\section{Signal Detection}
\label{sec:sigdet}

\subsection{Known noise variance and number of signal sources}
\subsubsection{Derivation of $P_{{\bf Y}|\mathcal H_i}$ in SIMO case}
We first analyze the situation when the noise power $\sigma^2$ and the number $M$ of signal sources are known to the receiver. We also assume in this first scenario that $M=1$. Further consider that $L>N$, which is a common assumption.

\paragraph{Pure noise likelihood $P_{{\bf Y}|\mathcal H_0}$}
In this first scenario, the noise entries $\theta^{(l)}_i$ are Gaussian and independent. The distribution for $\bf Y$, that can be seen as a random vector with $NL$ entries, is then a $NL$ multivariate uncorrelated complex Gaussian with covariance matrix $\sigma^2 {\bf I}_{NL}$,

\begin{align}
P_{{\bf Y}|{\mathcal H_0}}({\bf Y}) &= \frac{1}{(\pi\sigma^2)^{NL}}e^{-\frac{1}{\sigma^2}\tr{}{{\bf YY}^{\sf H}}} \\
\end{align}
by denoting ${\bf x}=(x_1,\ldots,x_N)^{\sf T}$ the eigenvalues of ${\bf YY}^{\sf H}$, \eqref{eq:PH0} only depends on $\sum_{i=1}^N x_i$,

\begin{align}
\label{eq:PH0}
P_{{\bf Y}|{\mathcal H_0}}({\bf Y}) &= \frac{1}{(\pi\sigma^2)^{NL}}e^{-\frac{1}{\sigma^2}\sum_{i=1}^N x_i}
\end{align}

\paragraph{Informative signal likelihood $P_{{\bf Y}|\mathcal H_1}$}
In scenario $\mathcal H_1$, the problem is more involved. The entries of the channel matrix $\bf H$ were modelled as jointly uncorrelated Gaussian distributed, with ${\rm E}[|h_{ij}|^2]=1/M$. Therefore, since $M=1$, ${\bf H}\in \mathbb C^{N\times 1}$ and ${\bf \Sigma}={\bf HH}^{\sf H}+\sigma^2{\bf I}_N$ has $N-1$ eigenvalues equal to $\sigma^2$ and another distinct eigenvalue $\lambda_1=\nu_1+\sigma^2=(\sum_{i=1}^N|h_{i1}|^2)+\sigma^2$. The density of $\lambda_1-\sigma^2$ is a complex $\chi^2_N$ distribution (which is, up to a scaling factor $2$, equivalent to a real $\chi^2_{2N}$ distribution). Hence the eigenvalue distribution of $\bf \Sigma$, defined on $\mathbb R^{+N}$,

\begin{align}
P_{{\bf \Lambda}}({\bf \Lambda})= 
& \label{eq:Lambda} \frac{1}{N}(\lambda_1-\sigma^2)_+^{N-1}\frac{e^{-(\lambda_1-\sigma^2)}}{(N-1)!}\prod_{i=2}^N \delta(\lambda_i-\sigma^2)
\end{align}

Given model \eqref{eq:H1}, $\bf Y$ is distributed as correlated Gaussian, 

\begin{align}
  \label{eq:YgivenSigma}
  P_{{\bf Y}|{\bf \Sigma},I_1}({\bf Y},{\bf \Sigma}) &= \frac{1}{\pi^{LN}\det({\bf \Lambda})^L}e^{-\tr\left({\bf YY}^{\sf H}{\bf U\Lambda}^{-1}{\bf U}^{\sf H}\right)}
\end{align}
where $I_k$ denotes the prior information ``$\mathcal H_1$ and $M=k$''.

Since the channel $\bf H$ is unknown, we need to integrate out all possible channels of the model \eqref{eq:H1} over the probability space of $N\times M$ matrices with Gaussian i.i.d. distribution. This is equivalent to integrating out all possible covariance matrices $\bf \Sigma$ over the space of such positive definite Hermitian matrices

\begin{align}
  \label{eq:intSigma}
  P_{{\bf Y}|\mathcal H_1}({\bf Y}) &= \int_{{\bf \Sigma}} P_{{\bf Y}|{\bf \Sigma}\mathcal H_1}({\bf Y},{\bf \Sigma})P_{\bf \Sigma}({\bf \Sigma})d{\bf \Sigma}
\end{align}

In the following, we shall prove that the integral \eqref{eq:intSigma} can be derived as the following closed-form expression,
\begin{theorem}
  \label{th:1}
  The detection ratio $C_{ {\bf Y}|I_1}({\bf Y})$ for the presence of an informative signal under prior information $I_1$, i.e. the receiver knows (i) $M=1$ signal source, (ii) the SNR $\sigma^{-2}$, reads

\begin{equation}
\label{eq:C_SIMO}
C_{{\bf Y}|I_1}({\bf Y}) = \frac1N\sum_{l=1}^N \frac{\sigma^{2(N+L-1)}e^{\sigma^2+\frac{x_l}{\sigma^2}}}{{\prod_{\substack{i=1 \\ i\neq l}}^N(x_l-x_i)}} J_{N-L-1}(\sigma^2,x_l) 
\end{equation}
with $x_1,\ldots,x_N$ the empirical eigenvalues of ${\bf YY}^{\sf H}$ and where 

\begin{align}
J_k(x,y) &= \int_{x}^{+\infty}t^{k}e^{-t-\frac{y}{t}}dt
\end{align}
\end{theorem}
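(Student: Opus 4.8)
The plan is to evaluate the likelihood ratio $C_{{\bf Y}|I_1}({\bf Y}) = P_{{\bf Y}|\mathcal H_1}({\bf Y})/P_{{\bf Y}|\mathcal H_0}({\bf Y})$ by carrying out the marginalization integral \eqref{eq:intSigma} in closed form, since $P_{{\bf Y}|\mathcal H_0}$ is already given by \eqref{eq:PH0}. Because $M=1$, the covariance ${\bf \Sigma} = {\bf HH}^{\sf H}+\sigma^2{\bf I}_N$ has the rank-one-plus-scalar structure recorded in \eqref{eq:Lambda}: a single free eigenvalue $\lambda_1$ whose law is the shifted $\chi^2$ density, and an eigenvalue $\sigma^2$ of multiplicity $N-1$ pinned by the Dirac masses. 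Using the equivalence, already noted before \eqref{eq:intSigma}, between integrating over ${\bf \Sigma}$ and integrating over the Gaussian channel ${\bf H}$, I would split the remaining integration into (i) a scalar integral over $\lambda_1\in[\sigma^2,\infty)$ and (ii) an integral over the leading eigenvector $u_1$. The key observation is that, since the entries of ${\bf H}$ are isotropic complex Gaussian, $u_1={\bf H}/\|{\bf H}\|$ is uniform on the unit sphere of $\mathbb C^N$ and independent of $\lambda_1=\|{\bf H}\|^2+\sigma^2$; the other eigenvectors never enter because ${\bf \Lambda}^{-1}$ acts as $\sigma^{-2}{\bf I}$ on their span.

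Next I would exploit ${\bf U\Lambda}^{-1}{\bf U}^{\sf H} = \sigma^{-2}{\bf I}_N + (\lambda_1^{-1}-\sigma^{-2})\,u_1 u_1^{\sf H}$ together with $\det({\bf \Lambda}) = \lambda_1\sigma^{2(N-1)}$, so the exponent in \eqref{eq:YgivenSigma} separates as $\tr({\bf YY}^{\sf H}{\bf U\Lambda}^{-1}{\bf U}^{\sf H}) = \sigma^{-2}\sum_i x_i + (\lambda_1^{-1}-\sigma^{-2})\,u_1^{\sf H}{\bf YY}^{\sf H}u_1$. The only $u_1$-dependent factor is then $e^{-(\lambda_1^{-1}-\sigma^{-2})u_1^{\sf H}{\bf YY}^{\sf H}u_1}$, and the main work is its spherical average. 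Diagonalizing ${\bf YY}^{\sf H}$ and using that, for $u_1$ uniform on the sphere, the squared moduli $(|w_1|^2,\ldots,|w_N|^2)$ of its coordinates in that eigenbasis are uniform on the simplex $\{\sum_i t_i=1\}$ (a flat Dirichlet law), this average reduces to the Laplace transform of the uniform simplex. That transform equals the divided difference $(N-1)!\sum_{l} e^{-a x_l}/\prod_{i\neq l}(ax_i-ax_l)$ with $a=\lambda_1^{-1}-\sigma^{-2}$; the factor $a^{-(N-1)}$ combines with the sign of $\prod_{i\neq l}(x_i-x_l)$ to produce exactly the denominator $\prod_{i\neq l}(x_l-x_i)$ of \eqref{eq:C_SIMO}. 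This spherical integral, with its attendant sign bookkeeping, is the step I expect to be the main obstacle.

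It then remains to perform the $\lambda_1$-integral. After inserting the simplex result, the factor $(\lambda_1-\sigma^2)^{N-1}$ from the eigenvalue density \eqref{eq:Lambda} cancels against the $(\lambda_1\sigma^2)^{N-1}/(\lambda_1-\sigma^2)^{N-1}$ coming from $a^{-(N-1)}$, leaving an integrand proportional to $\lambda_1^{N-L-1}e^{-(\lambda_1-\sigma^2)}e^{-(\lambda_1^{-1}-\sigma^{-2})x_l}$. Rewriting the exponent as $\sigma^2+x_l/\sigma^2-\lambda_1-x_l/\lambda_1$ pulls out the prefactor $e^{\sigma^2+x_l/\sigma^2}$ and identifies the remaining $\int_{\sigma^2}^{\infty}\lambda_1^{N-L-1}e^{-\lambda_1-x_l/\lambda_1}d\lambda_1$ as $J_{N-L-1}(\sigma^2,x_l)$. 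Finally I would divide by $P_{{\bf Y}|\mathcal H_0}$ from \eqref{eq:PH0}: the Gaussian factor $e^{-\sigma^{-2}\sum_i x_i}$, the powers of $\pi$, and the $(N-1)!$ terms all cancel, and collecting the surviving powers of $\sigma$, namely $\sigma^{-2L(N-1)}\cdot\sigma^{2(N-1)}\cdot\sigma^{2NL}$, yields the exponent $2(N+L-1)$, reproducing \eqref{eq:C_SIMO}. The one remaining point of care is to confirm that the decomposition into $(\lambda_1,u_1)$ introduces no spurious Jacobian or Vandermonde weight, which is guaranteed here because we work directly with the isotropic Gaussian law of ${\bf H}$ rather than with a generic Hermitian ensemble.
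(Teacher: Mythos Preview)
Your approach is correct and takes a genuinely different, more elementary route than the paper's. The paper integrates over the full unitary group using the Harish-Chandra--Itzykson--Zuber identity, then takes the degenerate limit $\lambda_2,\ldots,\lambda_N\to\sigma^2$; evaluating that limit requires a separate determinant lemma (Lemma~\ref{le:1}) about the matrix of derivatives $\kappa_j(x_i,\sigma^2)$. You bypass all of this by observing that, for $M=1$, ${\bf U\Lambda}^{-1}{\bf U}^{\sf H}$ depends only on the single eigenvector $u_1$, so the group integral collapses to a spherical average, and the latter is nothing but the Laplace transform of the flat Dirichlet law on the simplex---a classical divided-difference identity. This is cleaner and avoids the limit bookkeeping entirely. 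What the paper's route buys is generality: the HCIZ integral and the degenerate-limit machinery carry over verbatim to the $M>1$ case treated in Theorem~\ref{th:2}, whereas your rank-one decomposition does not extend directly (for $M\geq 2$ the relevant average is over the Stiefel manifold $V_M(\mathbb C^N)$, and the simplex trick no longer applies). One point to watch: the density in \eqref{eq:Lambda} carries an explicit $1/N$ coming from symmetrization over which coordinate of ${\bf\Lambda}$ holds the distinguished eigenvalue, whereas your polar decomposition of ${\bf H}$ yields the unsymmetrized Gamma$(N,1)$ law for $\lambda_1-\sigma^2$ directly; make sure you reconcile this normalization when matching your final expression to \eqref{eq:C_SIMO}.
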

\begin{proof}
  We start by noticing that $\bf H$ is Gaussian and therefore the joint density of its entries is invariant by left and right unitary products. As a consequence, the distribution of the matrix ${\bf \Sigma}={\bf HH}^{\sf H}+\sigma^2{\bf I}$ is unitarily invariant, i.e. for any unitary matrix ${\bf V}$, ${\bf V\Sigma V}^{\sf H}$ has the same joint density as $\bf \Sigma$. The latter density does not as a consequence depend on $\bf U$ in its singular value decomposition \eqref{eq:ULU}. This allows us to write, similarly as in \cite{GUI06},

\begin{align}
P_{{\bf Y}|\mathcal H_1}({\bf Y}) 
&= \int_{{\bf \Sigma}} P_{{\bf Y}|{\bf \Sigma},\mathcal H_1}({\bf Y},{\bf \Sigma})P_{\bf \Sigma}({\bf \Sigma})d{\bf \Sigma} \\
&= \int_{\mathcal U(N)\times \mathbb {R^+}^N} P_{{\bf Y}|{\bf \Sigma},\mathcal H_1}({\bf Y},{\bf \Sigma})P_{L{\bf \Lambda}}(L{\bf \Lambda})d{\bf U}d(L{\bf \Lambda}) \\
&= \label{eq:bigint} \int_{\mathcal U(N)\times \mathbb {R^+}} P_{{\bf Y}|{\bf \Sigma},\mathcal H_1}({\bf Y},{\bf \Sigma})P_{{\lambda_1}}({\lambda_1})d{\bf U}d{\lambda_1}
\end{align}

Equation \eqref{eq:bigint} leads then to

\begin{align}
	P_{{\bf Y}|I_1}({\bf Y}) &= \int_{\mathcal U(N)\times \mathbb {R^+}^N} \frac{1}{\pi^{NL}\det({\bf \Lambda})^L}e^{-\tr{}{\left({\bf YY}^{\sf H}{\bf U\Lambda}^{-1}{\bf U}^{\sf H}\right)}} (\lambda_1-\sigma^2)_+^{N-1}\frac{e^{-(\lambda_1-\sigma^2)}}{N!}\prod_{i=2}^N \delta(\lambda_i-\sigma^2) d{\bf U}d{\lambda_1}\ldots d{\lambda_N}
\end{align}

To go further, we utilize the Harish-Chandra identity \cite{BAL00}

\begin{equation}
\int_{\mathcal U(N)}e^{\kappa \tr{}{({\bf AUBU}^{\sf H}})}d{\bf U}=\left(\prod_{n=1}^{N-1}n!\right)\kappa^{N(N-1)/2}\frac{\det\left(\left\{e^{-A_iBj}\right\}_{\substack{1\leq i\leq N\\1\leq j\leq N}}\right)}{\Delta({\bf A})\Delta({\bf B})}
\end{equation}
in which, for a matrix $\bf X$ with eigenvalues $x_1,\ldots,x_N$, $\Delta({\bf X})$ indicates the Vandermonde determinant

\begin{equation}
\Delta({\bf X})=\prod_{i>j}(x_i-x_j)
\end{equation}

And then $P_{{\bf Y}|I_1}({\bf Y})$ further develops as

\begin{align}
P_{{\bf Y}|I_1}({\bf Y}) &= \lim_{\lambda_2,\ldots,\lambda_N\rightarrow \sigma^2} \frac{e^{\sigma^2}(-1)^{\frac{N(N-1)}{2}}\prod_{j=1}^{N-1}j!}{\pi^{LN}\sigma^{2L(N-1)}N!}\int_{\sigma^2}^{+\infty}\frac{1}{\lambda_1^L}(\lambda_1-\sigma^2)^{N-1}e^{-\lambda_1}\frac{\det\left(\left\{e^{-\frac{x_i}{\lambda_j}}\right\}_{\substack{1\leq i\leq N \\ 1\leq j\leq N}}\right)}{\Delta({\bf X})\Delta({{\bf \Lambda}^{-1}})}d\lambda_1 \\
\label{eq:Deltam1} &= \lim_{\lambda_2,\ldots,\lambda_N\rightarrow \sigma^2} \frac{e^{\sigma^2}\prod_{j=1}^{N-1}j!}{\pi^{LN}\sigma^{2L(N-1)}N!}\int_{\sigma^2}^{+\infty}\frac{1}{\lambda_1^L}(\lambda_1-\sigma^2)^{N-1}e^{-\lambda_1}\det\left({\bf \Lambda}^{N-1}\right)\frac{\det\left(\left\{e^{-\frac{x_i}{\lambda_j}}\right\}_{\substack{1\leq i\leq N \\ 1\leq j\leq N}}\right)}{\Delta({\bf X})\Delta({{\bf \Lambda}})} d\lambda_1 \\
\label{eq:det} &= \lim_{\lambda_2,\ldots,\lambda_N\rightarrow \sigma^2} \frac{e^{\sigma^2}\sigma^{2(N-1)(N-L-1)}\prod_{j=1}^{N-1}j!}{\pi^{LN}N!}\int_{\sigma^2}^{+\infty}{\lambda_1}^{N-L-1}(\lambda_1-\sigma^2)^{N-1}e^{-\lambda_1}\frac{\det\left(\left\{e^{-\frac{x_i}{\lambda_j}}\right\}_{\substack{1\leq i\leq N \\ 1\leq j\leq N}}\right)}{\Delta({\bf X})\Delta({\bf \Lambda})} d\lambda_1
\end{align}
in which $\bf X$ and $x_1,\ldots,x_N$ respectively correspond to ${\bf YY}^{\sf H}$ and its eigenvalues.
The equality \eqref{eq:Deltam1} comes from the fact that $\Delta({\bf \Lambda}^{-1})=(-1)^{N(N+3)/2}\frac{\Delta({\bf \Lambda})}{\det({\bf \Lambda})^{N-1}}$.

By denoting ${\bf y}=(y_1,\ldots,y_{N-1},y_N)=(\lambda_2,\ldots,\lambda_N,\lambda_1)$ and the functions,

\begin{align}
f(x_i,y_j) &= e^{-\frac{x_i}{y_j}} \\
f_i(y_j) &= f(x_i,y_j)
\end{align}
we can perform a similar derivation as in \cite{SIM06} to obtain

\begin{align}
\lim_{\lambda_2,\ldots,\lambda_N\rightarrow \sigma^2} \frac{\det\left(\left\{e^{-\frac{x_i}{\lambda_j}}\right\}_{\substack{1\leq i\leq N \\ 1\leq j\leq N}}\right)}{\Delta({\bf X})\Delta({\bf \Lambda})} &= \lim_{\substack{y_1,\ldots,y_{N-1}\rightarrow \sigma^2 \\ y_N\rightarrow \lambda_1}} (-1)^{N-1} \frac{\det\left( \left\{ f_i(x_j) \right\}_{i,j}\right) }{\Delta({\bf X})\Delta({\bf \Lambda})} \\
&= (-1)^{N-1} \frac{\det\left[ f_i(\sigma^2),\ f_i'(\sigma^2),\ldots,\ f^{(N-2)}(\sigma^2),\ f_i(\lambda_1) \right]}{\prod_{i<j}(x_i-x_j)(\lambda_1-\sigma^2)^{N-1}\prod_{j=1}^{N-2}j!}
\end{align}

The change of variables led to a switch of one column and explains the $(-1)^{N-1}$ factor when computing the resulting determinant.

The partial derivatives of $f$ along the second variable is

\begin{align}
\left(\frac{\partial}{\partial y^{k}} f\right)_{k\geq 1}(a,b) &= \sum_{m=1}^k \frac{(-1)^{k+m}}{b^{m+k}}\C_k^m\frac{(k-1)!}{(m-1)!}a^me^{-\frac{a}{b}} \\
&\defeq \kappa_k(a,b)e^{-\frac{a}{b}}
\end{align}

Back to the full expression of $P_{{\bf Y}|\mathcal H_1}({\bf Y})$, we then have

\begin{align}
P_{{\bf Y}|I_1}({\bf Y}) &= \frac{e^{\sigma^2}\sigma^{2(N-1)(N-L-1)}}{N\pi^{LN}} \int_{\sigma^2}^{+\infty}(-1)^{N-1}\lambda_1^{N-L-1}e^{-\lambda_1} \frac{\det\left[ f_i(\sigma^2),\ f_i'(\sigma^2),\ldots,\ f^{(N-2)}(\sigma^2),\ f_i(\lambda_1) \right]}{\prod_{i<j}(x_i-x_j)}d\lambda_1 \\
&= \frac{e^{\sigma^2}\sigma^{2(N-1)(N-L-1)}}{N\pi^{LN}\prod_{i<j}(x_i-x_j)}\int_{\sigma^2}^{+\infty} (-1)^{N-1}\lambda_1^{N-L-1}e^{-\lambda_1} \det\left[
\begin{array}{c:c:c}
	\begin{matrix} e^{-\frac{x_1}{\sigma^2}} \\ \vdots \\ e^{-\frac{x_N}{\sigma^2}} \end{matrix} & \left(\kappa_j(x_i,\sigma^2)e^{-\frac{x_i}{\sigma^2}}\right)_{\substack{1\leq i \leq N \\ 1\leq j\leq N-2}} & \begin{matrix} e^{-\frac{x_1}{\lambda_1}} \\ \vdots \\ e^{-\frac{x_N}{\lambda_1}} \end{matrix}\end{array}
\right] d\lambda_1
\end{align}

Before going further, we need the following result, demonstrated in Appendix \ref{ap:A},
\begin{lemma}
	\label{le:1}
	Given a family $\{a_1,\ldots,a_N\}\in {\mathbb R}^N$, $N\geq 2$, and $b\in {\mathbb R}^\ast$, we have
	
\begin{equation}
		\det \left[ \begin{array}{c:c} 1 &  \\ \vdots & \left( \kappa_{j}(a_i,b)\right)_{\substack{1\leq i\leq N \\ 1\leq j \leq N-1}} \\ 1 &  \end{array}\right] = \frac1{b^{N(N-1)}}\prod_{i<j}(x_j-x_i)
	\end{equation}
\end{lemma}

By factorizing every row of the matrix by $e^{-\frac{x_i}{\sigma^2}}$ and developing the determinant on the last column, one obtains

\begin{align}
P_{{\bf Y}|I_1}({\bf Y}) &= \frac{e^{\sigma^2}\sigma^{2(N-1)(N-L-1)}}{N\pi^{LN}\prod_{i<j}(x_i-x_j)} \int_{\sigma^2}^{+\infty} \lambda_1^{N-L-1}e^{-\lambda_1} e^{-\frac{\sum_{i=1}^Nx_i}{\sigma^2}} (-1)^{N-1} \sum_{l=1}^N (-1)^{N+l} \frac{e^{-x_l\left( \frac{1}{\lambda_1} - \frac{1}{\sigma^2} \right)}}{\sigma^{2(N-1)(N-2)}} \prod_{\substack{i<j \\ i\neq l \\ j\neq l}}(x_i-x_j) \\
&= \frac{e^{\sigma^2-\frac{1}{\sigma^2}\sum_{i=1}^Nx_i}}{N\pi^{LN}\sigma^{2(N-1)(L-1)}} \sum_{l=1}^N (-1)^{l-1} \int_{\sigma^2}^{+\infty} \lambda_1^{N-L-1}e^{-\lambda_1} \frac{e^{-x_l\left( \frac{1}{\lambda_1} - \frac{1}{\sigma^2} \right)}}{\prod_{i<l}(x_i-x_l)\prod_{i>l}(x_l-x_i)}d\lambda_1 \\
&= \frac{e^{\sigma^2-\frac{1}{\sigma^2}\sum_{i=1}^Nx_i}}{N\pi^{LN}\sigma^{2(N-1)(L-1)}} \sum_{l=1}^N \frac{e^{\frac{x_l}{\sigma^2}}}{{\prod_{\substack{i=1 \\ i\neq l}}^N(x_l-x_i)}}\int_{\sigma^2}^{+\infty} \lambda_1^{N-L-1} e^{-\left(\lambda_1+\frac{x_l}{\lambda_1}\right)} d\lambda_1  \\
\end{align}
which finally gives

\begin{equation}
\label{eq:resSIMO}
P_{{\bf Y}|I_1}({\bf Y}) = \frac{e^{\sigma^2-\frac{1}{\sigma^2}\sum_{i=1}^Nx_i}}{N\pi^{LN}\sigma^{2(N-1)(L-1)}}\sum_{l=1}^N \frac{e^{\frac{x_l}{\sigma^2}}}{{\prod_{\substack{i=1 \\ i\neq l}}^N(x_l-x_i)}} J_{N-L-1}(\sigma^2,x_l) 
\end{equation}
with

\begin{align}
J_k(x,y) &= \int_{x}^{+\infty}t^{k}e^{-t-\frac{y}{t}}dt\\
&= 2y^{\frac{k+1}2}K_{-k-1}(2\sqrt{y}) - \int_{0}^{x}t^{k}e^{-t-\frac{y}{t}}dt
\end{align}
where $K_n$ denotes the modified Bessel function of the second kind.

We finally have the desired decision criterion

\begin{equation}
C_{{\bf Y}|I_1}({\bf Y}) = \frac1N\sum_{l=1}^N \frac{\sigma^{2(N+L-1)}e^{\sigma^2+\frac{x_l}{\sigma^2}}}{{\prod_{\substack{i=1 \\ i\neq l}}^N(x_l-x_i)}} J_{N-L-1}(\sigma^2,x_l) 
\end{equation}
\end{proof}

\subsubsection{Derivation of $P_{{\bf Y}|\mathcal H_i}$ in MIMO case}
In the MIMO configuration, $P_{{\bf Y}|\mathcal H_0}$ remains unchanged and equation \eqref{eq:PH0} is still correct. For the subsequent derivations, we only treat the situation when $M\leq N$ but the case $M>N$ is a trivial extension.

In this scenario, ${\bf H}\in \mathbb C^{N\times M}$ is, as already mentioned, distributed as a Gaussian i.i.d. matrix. The mean variance of every row is ${\rm E}[\sum_{j=1}^M|h_{ij}|^2]=1$. Therefore $M{\bf HH}^{\sf H}$ is distributed as a standard Wishart matrix. Hence, observing that ${\bf \Sigma}-\sigma^2 {\bf I}_N$ is the diagonal matrix of eigenvalues of ${\bf HH}^{\sf H}$,

\begin{equation}
  \label{eq:MIMOSigma}
  {\bf \Sigma}={\bf U}\cdot \diag(\nu_1+\sigma^2,\ldots,\nu_M+\sigma^2,\sigma^2,\ldots,\sigma^2) \cdot {\bf U}^{\sf H}
\end{equation}
the eigenvalue distribution density of $\bf \Lambda$ can be derived \cite{RMT}

\begin{align}
  \label{eq:MIMOLambda}
P_{\bf \Lambda}(\bf \Lambda) &= 
\frac{(N-M)!M^{MN}}{N!}
\prod_{i=1}^M e^{-M\sum_{i=1}^M(\lambda_i-\sigma^2)}\frac{(\lambda_i-\sigma^2)_+^{N-M}}{(M-i)!(N-i)!}\prod_{i<j}^M(\lambda_i-\lambda_j)^2\prod_{i>M}^N\delta(\lambda_i-\sigma^2)
\end{align}

From the equations \eqref{eq:MIMOSigma} and \eqref{eq:MIMOLambda} above, we will now show the MIMO equivalent result to theorem \ref{th:1}, which unfolds as follows,
\begin{theorem}
  \label{th:2}
The detection ratio $C_{ {\bf Y}|I_M}({\bf Y})$ for the presence of informative signal under prior information $I_M$, i.e. when the receiver is aware of (i) $M\leq N$ signal sources, (ii) the SNR $\sigma^{-2}$, reads

\begin{align}
\label{eq:C_MIMO}
C_{{\bf Y}|I_M}({\bf Y}) &=\frac{\sigma^{2M(N+L-M)}(N-M)!e^{M^2\sigma^2}}{N!M^{(M-1-2L)M/2}\prod_{j=1}^{M-1}j!} 
 \sum_{{\bf a}\subset [1,N]}\frac{e^{\frac{\sum_{i=1}^Mx_{a_i}}{\sigma^2}}}{\displaystyle\prod_{a_i}\displaystyle\prod_{\substack{j\neq a_1 \\ \ldots \\ j\neq a_i}}(x_{a_i}-x_j)}\sum_{{\bf b}\in \mathcal P(M)}(-1)^{\sgn({\bf b})+M}\prod_{l=1}^M J_{N-L-2+b_l}(M\sigma^2,Mx_{a_l})
\end{align}
with $\mathcal P(M)$ the ensemble of permutations of $\{1,\ldots,M\}$, ${\bf b}=(b_1,\ldots,b_M)$ and $\sgn({\bf b})$ the signature of the permutation $\bf b$.
\end{theorem}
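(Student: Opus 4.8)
The plan is to mirror the structure of the proof of Theorem~\ref{th:1}, replacing the rank-one channel covariance by the full rank-$M$ Wishart covariance and carrying the extra combinatorial bookkeeping that the Vandermonde-squared factor in \eqref{eq:MIMOLambda} entails. Since $C_{{\bf Y}|I_M}({\bf Y}) = P_{{\bf Y}|\mathcal H_1}({\bf Y})/P_{{\bf Y}|\mathcal H_0}({\bf Y})$ and the denominator is unchanged and given by \eqref{eq:PH0}, the whole task reduces to evaluating the numerator $P_{{\bf Y}|\mathcal H_1}({\bf Y}) = \int_{\bf \Sigma} P_{{\bf Y}|{\bf \Sigma},\mathcal H_1}({\bf Y},{\bf \Sigma})\,P_{\bf \Sigma}({\bf \Sigma})\,d{\bf \Sigma}$ in closed form. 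As in the SIMO proof, I would first exploit the unitary invariance of $\bf \Sigma$: since $M{\bf HH}^{\sf H}$ is a standard Wishart matrix, the law of ${\bf \Sigma}={\bf HH}^{\sf H}+\sigma^2{\bf I}_N$ is invariant under ${\bf V\Sigma V}^{\sf H}$, so its density factors into the Haar measure on $\mathcal U(N)$ times the eigenvalue density \eqref{eq:MIMOLambda}. This splits the integral over $\bf \Sigma$ into an angular integral over $\mathcal U(N)$ and an integral over the eigenvalues $\lambda_1,\ldots,\lambda_N$, of which only $\lambda_1,\ldots,\lambda_M$ are free (their density carrying the $\prod_{i<j}^M(\lambda_i-\lambda_j)^2$ factor) while $\lambda_{M+1},\ldots,\lambda_N$ are pinned at $\sigma^2$ by the Dirac masses.

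Next I would integrate out $\bf U$ using the Harish-Chandra identity exactly as before, yielding the ratio of the determinant $\det(\{e^{-x_i/\lambda_j}\})$ to the product of Vandermonde determinants $\Delta({\bf X})\Delta({\bf \Lambda})$, with ${\bf X}={\bf YY}^{\sf H}$. The crucial new step is the confluent limit $\lambda_{M+1},\ldots,\lambda_N \to \sigma^2$: with $N-M$ eigenvalues collapsing, the corresponding $N-M$ columns of the Harish-Chandra determinant degenerate into successive derivatives of $f_i(\lambda)=e^{-x_i/\lambda}$ evaluated at $\sigma^2$, exactly as the single collapse in Theorem~\ref{th:1} turned $N-1$ columns into derivatives. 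This is the natural generalization of the argument taken from \cite{SIM06}, and I expect it to reuse the same derivative coefficients $\kappa_k(a,b)$ already computed there, together with Lemma~\ref{le:1} to reduce the fixed "derivative block". The $\prod_{i<j}^M(\lambda_i-\lambda_j)^2$ from the Wishart density must then be paired against the surviving Vandermonde $\Delta({\bf \Lambda})$: one factor cancels and the other is kept for expansion.

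The main obstacle, and the genuinely new ingredient beyond Theorem~\ref{th:1}, is the double combinatorial expansion. After the confluent limit, the determinant still has $M$ columns depending on the free variables $\lambda_1,\ldots,\lambda_M$. Expanding it produces a sum over permutations ${\bf b}\in\mathcal P(M)$ with signs $(-1)^{\sgn({\bf b})+M}$, each term assigning a distinct power-shift $b_l$ to the $l$-th free integration; simultaneously, cofactor expansion against the fixed block selects which $M$ of the $N$ eigenvalues $x_1,\ldots,x_N$ enter the active rows, giving the sum over subsets ${\bf a}\subset[1,N]$ with the associated partial Vandermonde denominator $\prod_{a_i}\prod_{j\neq a_1,\ldots,a_i}(x_{a_i}-x_j)$. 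Managing these two nested expansions simultaneously, and verifying that every surviving cross-term is absorbed into this subset/permutation structure, is where the bulk of the difficulty lies.

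With the angular and degenerate parts disposed of, the $M$ remaining integrals over $\lambda_1,\ldots,\lambda_M$ decouple into a product of scalar integrals $J_{N-L-2+b_l}(M\sigma^2,Mx_{a_l})$, the factor $M$ inside the arguments arising from the rescaling that makes $M{\bf HH}^{\sf H}$ Wishart, which shifts the exponential rate and the integration variable by $M$ in \eqref{eq:MIMOLambda}. As a consistency check, setting $M=1$ forces ${\bf b}=(1)$, collapses the subset sum to single indices, and reduces $J_{N-L-2+b_l}(M\sigma^2,Mx_{a_l})$ to $J_{N-L-1}(\sigma^2,x_l)$, recovering \eqref{eq:resSIMO}. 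Finally, collecting the prefactors from \eqref{eq:MIMOLambda}, the Harish-Chandra normalization $\prod_{n=1}^{N-1}n!$, and the Jacobian/power-of-$M$ constants from the rescaling, then dividing by \eqref{eq:PH0}, yields \eqref{eq:C_MIMO}. Carefully tracking the signs and the factorial and power-of-$M$ constants through the two expansions is the remaining routine but delicate bookkeeping.
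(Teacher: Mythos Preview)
Your plan is correct and mirrors the paper's own proof essentially step for step: unitary invariance of ${\bf\Sigma}$, the Harish-Chandra integral, the confluent limit $\lambda_{M+1},\ldots,\lambda_N\to\sigma^2$ handled via the $\kappa_k$ derivatives and Lemma~\ref{le:1}, cofactor expansion of the resulting determinant, and identification of the decoupled $J_k$ integrals after the rescaling by $M$. The one place your description is slightly off is the origin of the permutation sum: in the paper the sum over ${\bf b}\in\mathcal P(M)$ carrying the power-shifts $b_l$ comes not from expanding the determinant's $M$ free columns but from expanding the surviving Vandermonde factor $\prod_{i<j}^M(\lambda_i-\lambda_j)$ via the identity $\prod_{i<j}(X_j-X_i)=\sum_{{\bf b}}\sgn({\bf b})\prod_i X_i^{b_i-1}$, while the cofactor expansion of the determinant along those $M$ columns is precisely what produces the (ordered) subset sum over ${\bf a}$.
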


\begin{proof}
  We note first that for any couple $(\lambda_i,\lambda_j)$ of the $M$ largest eigenvalues of $\bf \Sigma$, $j\neq i$, $P_{\bar{\bf \Lambda}}(\lambda_1,\ldots,\lambda_i,\ldots,\lambda_j,\ldots,\lambda_M)=P_{\bar{\bf \Lambda}}(\lambda_1,\ldots,\lambda_j,\ldots,\lambda_i,\ldots,\lambda_M)$, where $\bar{\bf \Lambda}$ is the joint (unordered) random variable $(\lambda_1,\ldots,\lambda_M)$. Since $\bf H$ is still isometric in the case $M>1$, those two conditions are sufficient \cite{GUI06} to ensure

\begin{align}
P_{{\bf Y}|I_M}({\bf Y}) 
&= \int_{{\bf \Sigma}} P_{{\bf Y}|{\bf \Sigma},I_M}({\bf Y},{\bf \Sigma})P_{\bf \Sigma}({\bf \Sigma})d{\bf \Sigma} \\
&= \label{eq:bigint2} \int_{\mathcal U(N)\times \mathbb R^{+^M}} P_{{\bf Y}|{\bf \Sigma},I_M}({\bf Y},{\bf \Sigma})P_{\bar{\bf \Lambda}}(\bar{\bf \Lambda})d{\bf U}d\bar{\bf \Lambda}
\end{align}
which, using the same technique as previously, is further developed into

\begin{align}
P_{{\bf Y}|I_M}({\bf Y}) &= \lim_{\lambda_{M+1},\ldots,\lambda_N\rightarrow \sigma^2} \frac{(N-M)!M^{MN}e^{M^2\sigma^2}\sigma^{2(N-M)(N-L-1)}\prod_{j=1}^{N-1-M}j!}{N!\pi^{NL}\prod_{j=1}^{M-1}j!} \nonumber \\
&\times \int_{\sigma^2}^{+\infty}\cdots\int_{\sigma^2}^{+\infty}\prod_{i=1}^M{\lambda_i}^{N-L-1}(\lambda_i-\sigma^2)^{N-M}\prod_{i<j}^M(\lambda_i-\lambda_j)^2e^{-M\sum_{i=1}^M\lambda_i}\frac{\det\left(\left\{e^{-\frac{x_i}{\lambda_j}}\right\}_{\substack{1\leq i\leq N \\ 1\leq j\leq N}}\right)}{\Delta({\bf X})\Delta({\bf \Lambda})} d\lambda_1\ldots d\lambda_M \\
&= \frac{(N-M)!M^{MN}e^{M^2\sigma^2}\sigma^{2(N-M)(N-L-1)}(-1)^{MN-\frac{M(M+1)}{2}}}{N!\pi^{NL}\prod_{j=1}^{M-1}j!} \nonumber \\
&\times \int_{\sigma^2}^{+\infty}\cdots\int_{\sigma^2}^{+\infty}\prod_{i=1}^M{\lambda_i}^{N-L-1}\frac{\prod_{i<j}^M(\lambda_i-\lambda_j)}{\prod_{i<j}^N(x_i-x_j)}e^{-M\sum_{i=1}^M\lambda_i}\det\left[
\begin{array}{c:c:c}
\begin{matrix} e^{-\frac{x_1}{\sigma^2}} \\ \vdots \\ e^{-\frac{x_N}{\sigma^2}} \end{matrix} & \kappa_j(x_i,\sigma^2)e^{-\frac{x_i}{\sigma^2}} & \begin{matrix} e^{-\frac{x_1}{\lambda_M}} & \cdots & e^{-\frac{x_1}{\lambda_1}} \\ \vdots & \cdots & \vdots \\ e^{-\frac{x_N}{\lambda_M}} & \cdots & e^{-\frac{x_N}{\lambda_1}} \end{matrix}
\end{array}
\right]
\end{align}
in which the term $(-1)^{MN-\frac{M(M+1)}{2}}$ originates from the $M$ exchanges between the $k^{th}$ column and the $(N-k+1)^{th}$ column, $k\in [1,M]$.

By factorizing the determinant by $e^{-\frac{1}{\sigma^2}\sum_{i=1}^Nx_i}$, developing along the $M$ last columns, we have from Lemma \ref{le:1},

\begin{align}
&\det\left[
\begin{array}{c:c:c}
	\begin{matrix} e^{-\frac{x_1}{\sigma^2}} \\ \vdots \\ e^{-\frac{x_N}{\sigma^2}} \end{matrix} & \kappa_j(x_i,\sigma^2)e^{-\frac{x_i}{\sigma^2}} & \begin{matrix} e^{-\frac{x_1}{\lambda_M}} & \cdots & e^{-\frac{x_1}{\lambda_1}} \\ \vdots & \cdots & \vdots \\ e^{-\frac{x_N}{\lambda_M}} & \cdots & e^{-\frac{x_N}{\lambda_1}} \end{matrix}
\end{array}
\right] \\
&= e^{-\frac{\sum_{i=1}^Nx_i}{\sigma^2}} \sum_{a_1\in [1,N]}(-1)^{\left[N+a_1\right]}e^{-x_{a_1}\left(\frac{1}{\lambda_1}-\frac{1}{\sigma^2}\right)}\cdots \nonumber \\
&~ ~ ~ ~ ~ ~ ~ ~ ~ ~ ~ \times \sum_{\substack{a_M\neq a_1 \\ \ldots \\ a_M\neq a_{M-1}}}^N(-1)^{\left[N-M+1+a_M-\sum_{i<M}\delta(a_i<a_M)\right]}\frac{e^{-x_{a_M}\left(\frac{1}{\lambda_M}-\frac{1}{\sigma^2}\right)}}{\sigma^{2(N-M-1)(N-M)}}\prod_{\substack{i<j \\ i,j\neq a_1 \\ \cdots \\ i,j\neq a_{M}}}(x_i-x_j) \\
&= e^{-\frac{\sum_{i=1}^Nx_i}{\sigma^2}} \sum_{{\bf a}\subset [1,N]}(-1)^{\left[ MN-\frac{M(M+1)}{2}+\sum_{i=1}^Ma_i+\sum_{i<j}\delta(a_i<a_j)\right]}e^{-\sum_{i=1}^Mx_{a_i}\left(\frac{1}{\lambda_i}-\frac{1}{\sigma^2}\right)}\prod_{\substack{i<j \\ i,j\neq a_1\\ \cdots \\ i,j\neq a_{M}}}(x_i-x_j) \\
&=  e^{-\frac{\sum_{i=1}^Nx_i}{\sigma^2}} \sum_{{\bf a}\subset [1,N]}(-1)^{\left[ MN-\frac{M(M+1)}{2}+\sum_{i=1}^Ma_i+\frac{M(M-1)}{2}\right]}\frac{e^{-\sum_{i=1}^Mx_{a_i}\left(\frac{1}{\lambda_i}-\frac{1}{\sigma^2}\right)}}{\sigma^{2(N-M-1)(N-M)}}\frac{(-1)^{(\sum_{i=1}^Ma_i)-M}\prod_{i<j}^N(x_i-x_j)}{\prod_{a_i}\prod_{\substack{j\notin [a_1,\ldots,a_i]}}(x_{a_i}-x_j)} \\
&= e^{-\frac{\sum_{i=1}^Nx_i}{\sigma^2}} \sum_{{\bf a}\subset [1,N]}\frac{e^{-\sum_{i=1}^Mx_{a_i}\left(\frac{1}{\lambda_i}-\frac{1}{\sigma^2}\right)}}{\sigma^{2(N-M-1)(N-M)}}\frac{\prod_{i<j}^N(x_i-x_j)}{\prod_{a_i}\prod_{\substack{j\notin [a_1,\ldots,a_i]}}(x_{a_i}-x_j)}
\end{align}

Together, this becomes,

\begin{align}
P_{{\bf Y}|I_M}({\bf Y}) &= \frac{M^{MN}(N-M)!e^{M^2\sigma^2-\frac{\sum_{i=1}^Nx_i}{\sigma^2}}}{N!\pi^{NL}\sigma^{2(N-M)(L-M)}\prod_{j=1}^{M-1}j!} \nonumber \\
&\times \int_{\sigma^2}^{+\infty}\cdots\int_{\sigma^2}^{+\infty}\prod_{i=1}^M{\lambda_i}^{N-L-1}\prod_{i<j}^M(\lambda_i-\lambda_j)e^{-M\sum_{i=1}^M\lambda_i}\sum_{{\bf a}\subset [1,N]}\frac{e^{-\sum_{i=1}^Mx_{a_i}\left(\frac{1}{\lambda_i}-\frac{1}{\sigma^2}\right)}}{\prod_{a_i}\prod_{\substack{j\notin [a_1,\ldots,a_i]}}(x_{a_i}-x_j)}d\lambda_1 \ldots d\lambda_M \\
&= \frac{(N-M)!e^{M^2\sigma^2-\frac{\sum_{i=1}^Nx_i}{\sigma^2}}}{N!M^{(M-2L-1)M/2}\pi^{NL}\sigma^{2(N-M)(L-M)}\prod_{j=1}^{M-1}j!} \nonumber \\
&\times \sum_{{\bf a}\subset [1,N]} \frac{e^{\frac{\sum_{i=1}^Mx_{a_i}}{\sigma^2}}}{\prod_{a_i}\prod_{\substack{j\notin \{a_1,\ldots, a_i\}}}(x_{a_i}-x_j)} \int_{M\sigma^2}^{+\infty}\cdots\int_{M\sigma^2}^{+\infty}e^{-\sum_{i=1}^M\left(\lambda_i+\frac{Mx_{a_i}}{\lambda_i}\right)}\prod_{i=1}^M{\lambda_i}^{N-L-1}\prod_{i<j}^M(\lambda_i-\lambda_j)d\lambda_1 \ldots d\lambda_M
\end{align}

Remind now the Vandermonde determinant identity

\begin{equation}
  \prod_{i<j}^M(X_j-X_i) = \sum_{ {\bf b}\in \mathcal P(M)}\sgn({\bf b})\prod_{i=1}^M X_i^{b_i-1}
\end{equation}
where $\mathcal P(k)$ is the ensemble of permutations of $k$ and $\sgn({\bf b})$ designs the signature of the permutation $\bf b$. Recognizing the expression of $J_k$, we finally obtain

\begin{align}
\label{eq:resMIMO}
P_{{\bf Y}|I_M}({\bf Y}) &=\frac{(N-M)!M^{(2L-M+1)M/2}e^{M^2\sigma^2-\frac{\sum_{i=1}^Nx_i}{\sigma^2}}}{N!\pi^{NL}\sigma^{2(N-M)(L-M)}\prod_{j=1}^{M-1}j!} \nonumber \\
&\times \sum_{{\bf a}\subset [1,N]}\frac{e^{\frac{\sum_{i=1}^Mx_{a_i}}{\sigma^2}}}{\displaystyle\prod_{a_i}\displaystyle\prod_{\substack{j\neq a_1 \\ \ldots \\ j\neq a_i}}(x_{a_i}-x_j)}\sum_{{\bf b}\in \mathcal P(M)}(-1)^{\sgn({\bf b})+M}\prod_{l=1}^M J_{N-L-2+b_l}(M\sigma^2,Mx_{a_i})
\end{align}

For instance, when $M=2$, we have

\begin{align}
P_{{\bf Y}|I_2}({\bf Y}) &= \frac{2^{2L-1}e^{4\sigma^2-\frac{1}{\sigma^2}\sum_{i=1}^Nx_i}}{N(N-1)\sigma^{2(N-2)(L-2)}\pi^{NL}} \sum_{{\bf a}\subset [1,N]}\frac{e^{\frac{x_{a_1}+x_{a_2}}{\sigma^2}}}{\prod_{\substack{j\neq a_1}}(x_{a_1}-x_j) \prod_{\substack{j\neq a_i\\ j\neq a_2}}(x_{a_2}-x_j)}\left(J_{N-L}^{(a_1)}J_{N-L-1}^{(a_2)}-J_{N-L-1}^{(a_1)}J_{N-L}^{(a_2)}\right)
\end{align}
in which $J^{(x)}_{k}=J_{k}(2\sigma^2,2x)$.

Decisions regarding the signal detection are then carried out by computing the ratio $C_{ {\bf Y} | I_M}({\bf Y})$ between equation \eqref{eq:resMIMO} and equation \eqref{eq:PH0} as follows

\begin{align}
C_{{\bf Y}|I_M}({\bf Y}) &=\frac{\sigma^{2M(N+L-M)}(N-M)!e^{M^2\sigma^2}}{N!M^{(M-1-2L)M/2}\prod_{j=1}^{M-1}j!} 
 \sum_{{\bf a}\subset [1,N]}\frac{e^{\frac{\sum_{i=1}^Mx_{a_i}}{\sigma^2}}}{\displaystyle\prod_{a_i}\displaystyle\prod_{\substack{j\neq a_1 \\ \ldots \\ j\neq a_i}}(x_{a_i}-x_j)}\sum_{{\bf b}\in \mathcal P(M)}(-1)^{\sgn({\bf b})+M}\prod_{l=1}^M J_{N-L-2+b_l}(M\sigma^2,Mx_{a_i})
\end{align}
\end{proof}

Note that $C_{{\bf Y}|I_M}({\bf Y})$ is a function of the empirical eigenvalues $x_1,\ldots,x_N$ of ${\bf YY}^{\sf H}$ only. This is explained by the presence of only Gaussian random entities, whose isometric property leads the eigenvectors of ${\bf YY}^{\sf H}$ to contain no additional information. Remark also that equation \eqref{eq:C_MIMO} is a non-trivial function of $x_1,\ldots,x_N$, which does not involve the sum of the $x_i$'s as for the case of the classical energy detector.

In the following, we extend the current signal detector to the situations where $M$ and $\sigma^2$ are not {\it a priori} known at the receiver.

\subsection{Number of sources and/or noise variance unknown}
\subsubsection{Unknown noise variance}
\label{sec:SNR}
Efficient signal detection when the noise level is unknown is highly desirable. Indeed, if the noise level were exactly known, some prior noise detection mechanism would be required. The difficulty here is handily avoided thanks to {\it ad-hoc} methods that are asymptotically independent of the noise level \cite{CAR08}-\cite{ZEN08}.
Instead, we shall consider some prior information about the noise level. Establishing prior information of variables defined in a continuum is still a controverted debate of the maximum entropy theory. However, a few solutions are classically considered that are based on desirable properties. Those are successively detailed in the following.

Two classical cases are usually encountered,
\begin{itemize}
\item the noise level is known to belong to a continuum $[\sigma^2_{-},~\sigma^2_{+}]$. If no more information is known, then it is desirable to take a uniform prior for $\sigma^2$ and then

\begin{equation}
\label{eq:unis2}
P_{\sigma^2}(\sigma^2)d{\sigma^2}=\frac{1}{\sigma^2_{+}-\sigma^2_{-}}d{\sigma^2}
\end{equation}
However, a questionable issue of invariance to variable change arises. Indeed, if $P_{\sigma^2}$ is uniform, the distribution associated to the variable $\sigma=\sqrt{\sigma^2}$ is then non-uniform. This old problem is partially answered by Jeffreys \cite{JEY46} who suggests that an {\it uninformative prior} should be any distribution that does not add information to the posterior distribution $P_{\sigma^2|{\bf Y},I_M}$ (for recent developments, see also \cite{CAT01}).
However, in our problem, the uninformative prior is rather involved so we only consider uniform prior distribution \eqref{eq:unis2} for $\sigma^2$ (we denote $I_M'=``\mathcal H_1,\sigma^2\in [\sigma^2_{-},~\sigma^2_{+}]"$) and therefore

\begin{align}
\label{eq:intsigma}
P_{{\bf Y}|I_M'} &= \frac{1}{\sigma^2_{+}-\sigma^2_{-}}\int_{\sigma^2_{-}}^{\sigma^2_{+}} P_{{\bf Y}|\sigma^2,I_M'}({\bf Y},\sigma^2)d\sigma^2 
\end{align}
\item one has no information concerning the noise power. The only information about $\sigma^2$ is $\sigma^2>0$. Again, we might want to subjugate $\sigma^2$ to Jeffreys' {\it uninformative prior}. However, computing this prior is again rather involved. The other alternative is to take the limit of \eqref{eq:intsigma} when $\sigma_{-}$ tends to zero and $\sigma_{+}$ tends to infinity. This limiting process produces an improper integral form. This would be, with $I_M''$ the updated background information,

\begin{align}
\label{eq:intsigma2}
P_{{\bf Y}|I_M''} &= \lim_{x\rightarrow \infty}\frac{1}{x-\frac{1}{x}}\int_{\frac{1}{x}}^{x} P_{{\bf Y}|\sigma^2}({\bf Y},\sigma^2)d\sigma^2
\end{align}
\end{itemize}

This leads in any case to the updated decisions $C_{ {\bf Y}|M}$ of the form,

\begin{equation}
  C_{ {\bf Y}|M} = \frac{\int_{\sigma^2_-}^{\sigma^2_+}P_{ {\bf Y}|\sigma^2,I_M'}({\bf Y},\sigma^2)d\sigma^2}{\int_{\sigma^2_-}^{\sigma^2_+}P_{ {\bf Y}|\sigma^2,\mathcal H_0}({\bf Y},\sigma^2)d\sigma^2}
\end{equation}

The computational difficulty raised by the integrals $J_k(x,y)$ does not allow for any satisfying closed-form formulas for \eqref{eq:intsigma} and \eqref{eq:intsigma2}. 
In the following, we therefore only consider the bounded continuum scenario.

\begin{figure}
\centering
\includegraphics[]{./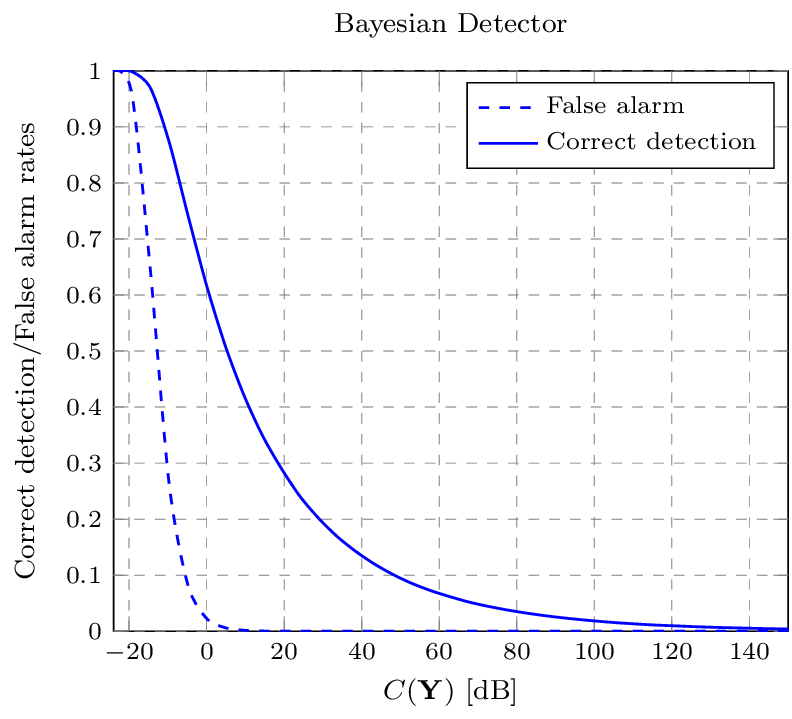}
\includegraphics[]{./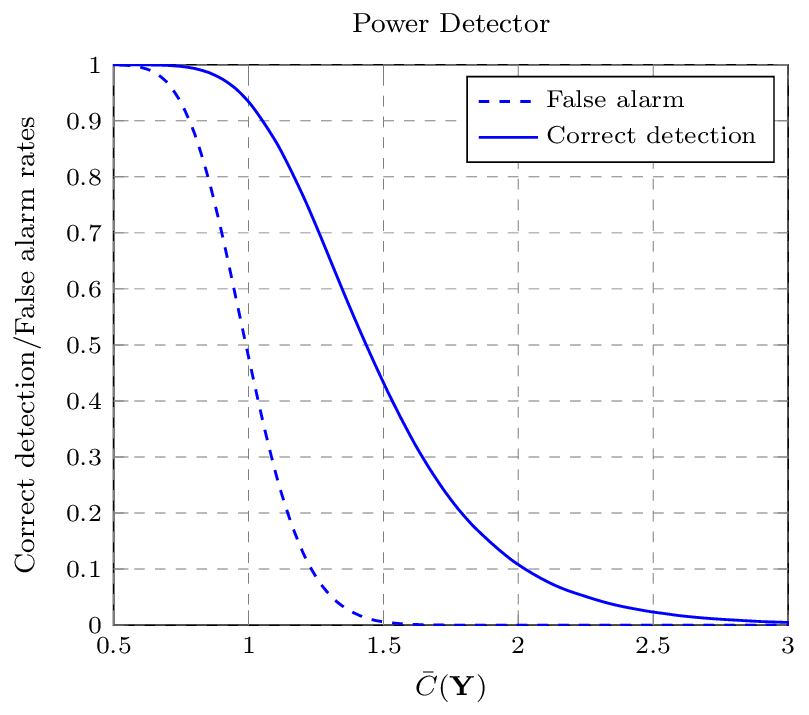}
\caption{Power detection performance in SIMO - $M=1$, $N=4$, $L=8$, ${\sf SNR}=-3~{\rm dB}$. On the left, Bayesian detector; on the right, classical power detector.}
\label{fig:FA1}
\end{figure}

\subsection{Unknown number of sources $M$}
In practical cases, the number of transmitting sources is only known to be finite. If only an upper bound value $M_{\rm max}$ on $M$ is known, a uniform prior is assigned to $M$ (which is again compliant with the maximum entropy principle). The probability distribution of $\bf Y$ under hypothesis $I_0=$``$\sigma^2$ known, $M$ unknown'', reads

\begin{align}
P({\bf Y}|I_0) &= \sum_{i=1}^{M_{\rm max}}P({\bf Y}|``M=i'',I_0)\cdot P(``M=i''|I_0) \\
&= \frac{1}{M_{\rm max}}\sum_{i=1}^{M_{\rm max}}P({\bf Y}|``M=i'',I_0)
\end{align}
which does not meet any computational difficulty. 

This leads then to the decision ratio $C_{ {\bf Y}|\sigma^2}$,

\begin{equation}
  C_{ {\bf Y}|\sigma^2} = \frac{\sum_{i=1}^{M_{\rm max}} P({\bf Y}|``M=i'',I_0) }{\sum_{i=1}^{M_{\rm max}} P({\bf Y}|``M=i'',\mathcal H_0)}
\end{equation}

\section{Simulation and Results}
\label{sec:simu}

\begin{figure}
\centering
\includegraphics[]{./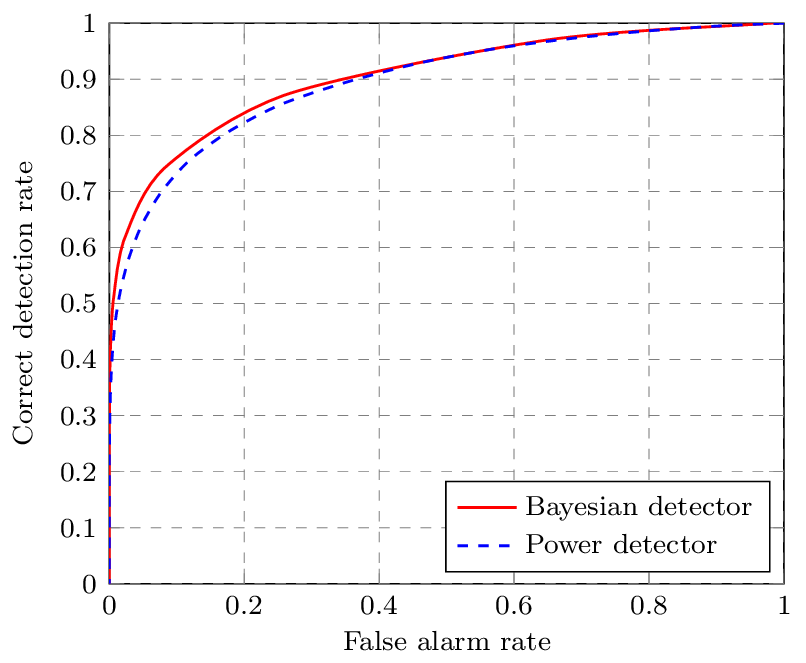}
\includegraphics[]{./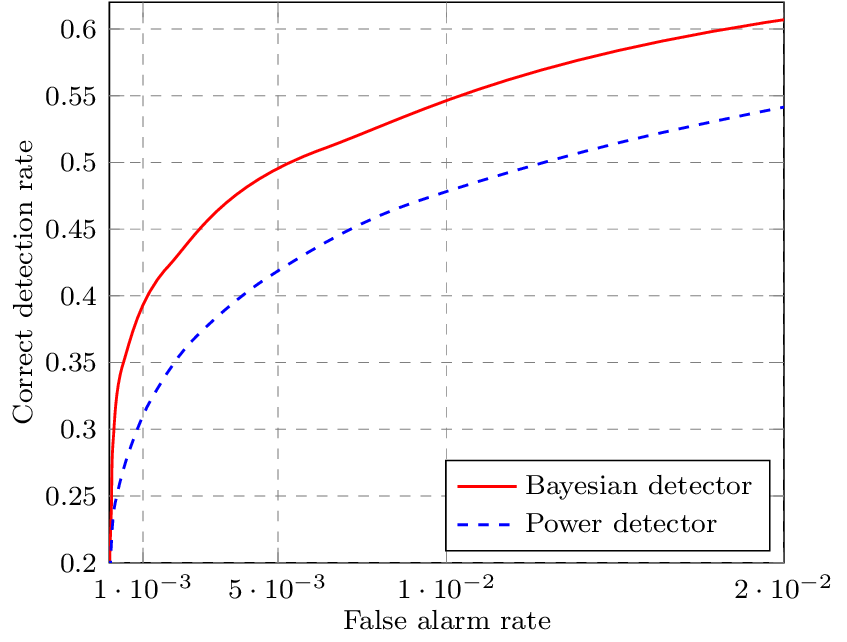}
\caption{CDR against FAR for SIMO transmission - $M=1$, $N=4$, $L=8$, ${\sf SNR}=-3~{\rm dB}$. On the left, full FAR range; on the right, FAR range of practical interest.}
\label{fig:CDmFA1}
\end{figure}

\begin{figure}
\centering
\includegraphics[]{./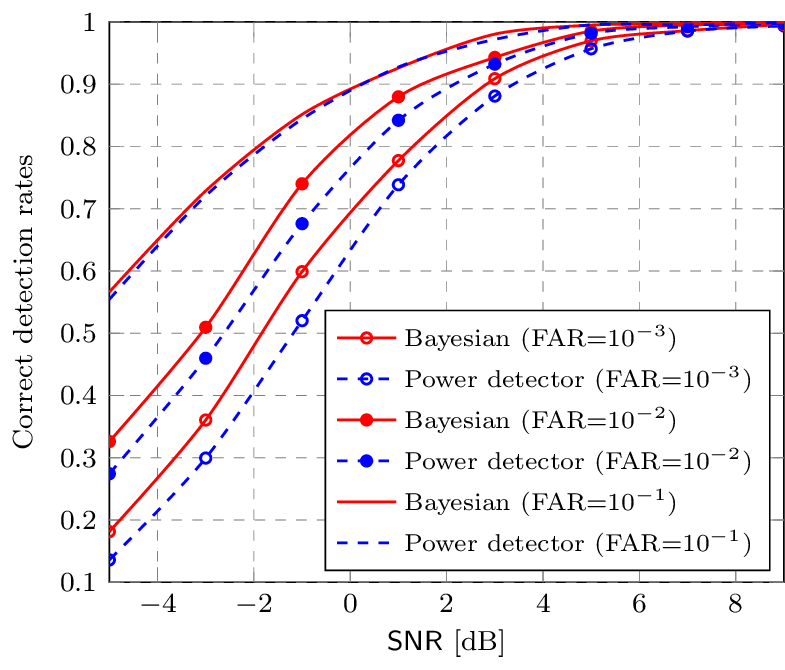}
\caption{Correct detection rates under FAR constraints for different SNR levels, $M=1$, $N=4$, $L=8$}
\label{fig:CD_SNR}
\end{figure}

\begin{figure}
  \centering
\includegraphics[]{./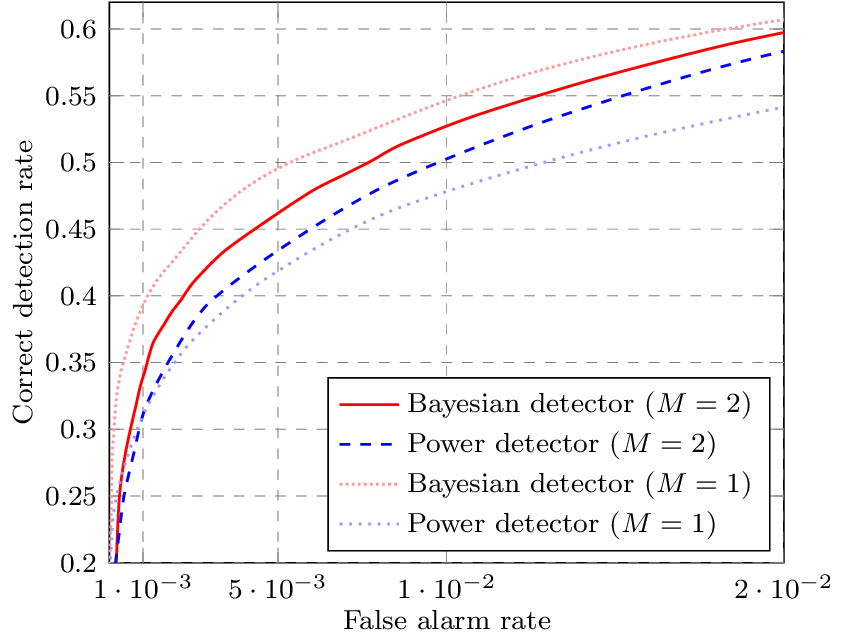}
\caption{CDR against FAR for SIMO transmission - $M=2$, $N=4$, $L=8$, ${\sf SNR}=-3~{\rm dB}$. FAR range of practical interest.}
\label{fig:CDmFA2}
\end{figure}

In the following, we present results obtained for the aforementioned SIMO and MIMO scenarios, using theorems \ref{th:1} and \ref{th:2} respectively.
In the simulations, the hypothesis concerning incoming data, channel aspect and noise figure are those presented in the model of Section \ref{sec:model},\footnote{it might be objected that assuming the maximum entropy distributions in the simulations is dishonest since the real distributions are unknown; the simulations here however intend only to verify the validity of our theoretical results and cannot be used as any proof of performance.} i.e. the channel, signal and noise matrix entries are i.i.d. Gaussian with respective variance $1/M$, $1$ and $1$. The results are validated by and compared with the classical power detector, which merely consists in summing the eigenvalues $x_1,\ldots,x_N$ of ${\bf YY}^{\sf H}$ and assumes an empirical decision threshold; the corresponding decision is then based on the scaled value

\begin{equation}
  \bar{C}({\bf Y}) = \frac1{LN\sigma^2}\sum_{i=1}^Nx_i
\end{equation}

In our first example, we consider a SIMO channel with $N=4$ antennas at the receiver, $L=8$ sampling periods and a signal to noise ratio ${\sf SNR}=-3~{\rm dB}$. For fair comparison with classical signal detection algorithms, we stick to the \textit{false alarm rate} (FAR) against \textit{correct detection rate} (CDR) performance evaluation. Figure \ref{fig:FA1} presents the respective FAR and CDR for the novel Bayesian estimator and for the classical power detector, obtained on $100,000$ Monte Carlo channel realizations. Depending on the acceptable FAR, the decision threshold for the power detector is somewhere around $1$ while the threshold for the Bayesian approach is somewhere around $C({\bf Y})=0~{\rm dB}$. Since both algorithms scale very differently, fair comparison is obtained by plotting the CDR versus FAR curve. This is depicted in Figure \ref{fig:CDmFA1}. For large values of the FAR, both Bayesian and power detectors produce similar performance. In most practical applications, one is however interested into high performance CDR for fixed low FAR (in the order of $10^{-3}$ to $10^{-1}$ depending on the application); for this range of FAR, it is observed, in the right part of Figure \ref{fig:CDmFA1}, that as much as a $10\%$ increase in detection ability is obtained by the Bayesian detector and this gain increases along with smaller FAR. This is confirmed by Figure \ref{fig:CD_SNR} in which the performance of the Bayesian signal detector with respect to the energy detector for different constrained FAR is presented against the SNR.

In Figure \ref{fig:CDmFA2}, we took $N=4$, $L=8$ and ${\sf SNR}=-3~{\rm dB}$ as before but consider now $M=2$ signal sources; we then use theorem \ref{th:2} here. In this scenario the classical power detector closes in the gap with the Bayesian detector, compared to the SIMO situation. Having depicted in the same plot the detection performance for $M=1$, we notice that the classical energy detector performs better in the scenario $M=2$, which might be interpreted as a result of a channel hardening effect \cite{HOC04}; on the contrary, the Bayesian detector performs less accurately in this case, which can be attributed to the increased number of random variables (both in the channel matrix $\bf H$ and in the signal matrix $\bf \Theta$) inducing increased `freeness' in the interpretation of the hypothetical origins of the output matrix $\bf Y$.

Consider now the scenario when the noise variance $\sigma^2$ is {\it a priori} known to belong to the interval $[\sigma^2_{-},\sigma^2_{+}]$. The two-dimensional integration of equation \eqref{eq:unis2} is prohibitive for producing numerical results. We therefore divide the continuum $[\sigma^2_{-},\sigma^2_{+}]$ into $K$ subsets $[\sigma^2_{-}+k\Delta(\sigma^2),\sigma^2_{-}+(k+1)\Delta(\sigma^2)]$, for $k\in \{0,\ldots,K-1\}$ and $\Delta(\sigma^2)=(\sigma^2_{+}-\sigma^2_{-})/K$ where $\Delta(\sigma^2)$ is chosen small enough so to produce a rather good approximation of \eqref{eq:unis2}. This is presented in Figure \ref{fig:CDCDSNR} which demonstrates the effect of an inaccurate knowledge of the noise power in terms of CDR and FAR. In this simulation, $M=1$, $N=4$, $L=8$ and ${\sf SNR}=0~{\rm dB}$. Comparison is made between the cases of exact SNR knowledge, short SNR range $[\sigma^2_{-},\sigma^2_{+}]=[-2.5,2.5]$~dB discretized as a set $\{-2.5,-1.5,\ldots,1.5,2.5\}$~dB, large SNR range $[\sigma^2_{-},\sigma^2_{+}]=[-5,5]$~dB discretized as a set $\{-5,-4,\ldots,4,5\}$~dB and very large range $[\sigma^2_{-},\sigma^2_{+}]=[-9,9]$~dB discretized as a set $\{-9,-8,\ldots,8,9\}$~dB. Observe that the short SNR range provides already a strong performance decay compared to the ideal scenario, which is particularly noticeable in terms of CDR performance at low FAR. Larger SNR ranges are then only slightly worse than the short range scenario and seem to converge to a `worst-case limit';\footnote{note that the plot for the very large SNR range was intentionally removed from the left hand-side plot for it almost perfectly fitted to plot corresponding to the large SNR range.} this can be interpreted by the fact that the additional hypotheses, i.e. very strong or very little noise power, are automatically discarded as the values of $P_{ {\bf Y}|\sigma^2,I_1'}({\bf Y},\sigma^2)$ and $P_{ {\bf Y}|\sigma^2,\mathcal H_0}({\bf Y},\sigma^2)$ become negligible for unrealistic values of $\sigma^2$. Additional simulations for larger SNR ranges were carried out that visually confirm that the FAR and CDR plots are identical here as long as $\sigma^2_-\leq -5$ dB and $\sigma^2_+\geq 5$ dB. Therefore, simulations suggest that the proposed Bayesian signal detector is able to cope even with totally unknown SNR, which is obviously not the case of the classical energy detector that relies on an SNR-dependent decision threshold.

\begin{figure}
\centering
\includegraphics[]{./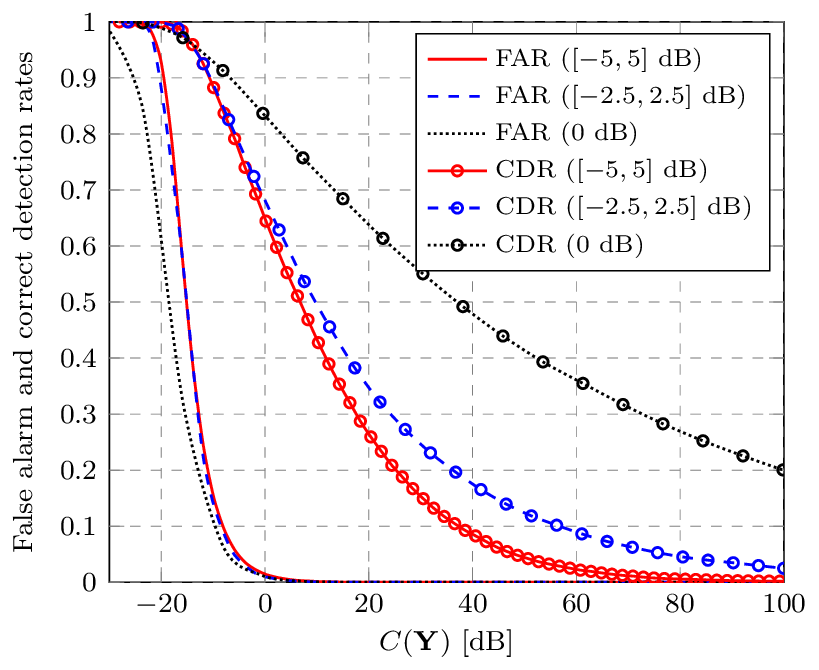}
\includegraphics[]{./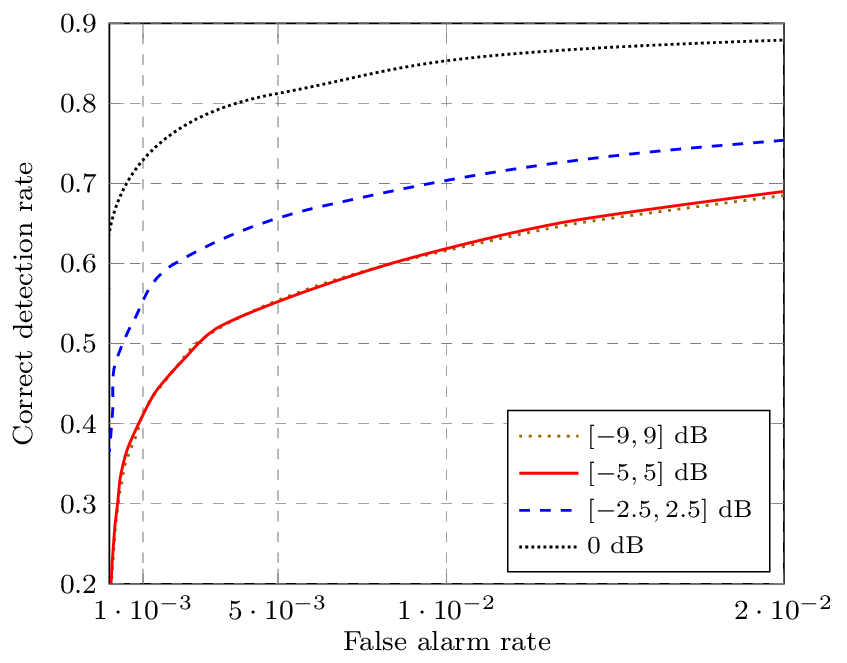}
\caption{FAR and CDR, for different {\it a priori} information: exact SNR ($0$ dB), short range SNR ($[-2.5,2.5]$ dB) and large range SNR ($[-5,5]$ dB) SNR, $M=1$, $N=4$, $L=8$, true ${\sf SNR}=0~{\rm dB}$}
\label{fig:CDCDSNR}
\end{figure}

\section{Discussion}
\label{sec:discussion}
In the previous framework, we relied on the maximum entropy principle in order to derive unique {\it a priori} distributions for the various unknown system parameters. The provided Bayesian solutions, derived from the channel state of knowledge available at the receiver, were claimed consistent in the proposed probability framework. This framework is in particular extensible to whatever prior knowledge the receiver might have on the transmission environment. However, some limitations can be raised. First, as stated in \ref{sec:SNR}, uninformative priors modeling is still an incomplete and controversial theory, for which no definite answer is available to this day. When such a prior information is to be treated, the proposed signal detection framework is not capable of singling out a proper maximum entropic model; this constitutes a major coherency issue of our source detection framework. Also, the mathematical tools to derive maximum entropy distributions, e.g. Lagrangian multipliers, only cope with statistical prior knowledge, such as the moments of the underlying density functions, and are rarely able to treat deterministic knowledge.

Note however that the advances in the field of random matrix theory provides new answers to problems of high dimensionality, even for finite $N,L<\infty$ values. Those problems, such as the present maximum-likelihood multi-antenna signal detection, are often considered intractable and suffer in practice from the so-called curse of dimensionality. The current study relies nonetheless on the important property that the transmission channel $\bf H$ is modelled as i.i.d. Gaussian; if $\bf H$ were more structured, it would have been more difficult to obtain an integral expression similar to \eqref{eq:bigint2} and the final results, be it derivable, would implicate not only the eigenvalues but also the eigenvectors of ${\bf YY}^{\sf H}$. 

More importantly, the proposed Bayesian framework allows one to answer a wider scope of problems than the present multi-source signal detection. In particular, we discussed in the introduction the somewhat different problem of counting the number of transmitting sources, which has received a lot of interest. In our framework, this consists in considering a set of hypotheses $\mathcal H'_0,\ldots,\mathcal H'_{M_{\rm max}}$, where $\mathcal H'_k$ is the hypothesis ``$k$ signals are being transmitted'' for a maximum of $M_{\rm max}$ sources, and evaluating the $M_{\rm max}$ hypothesis tests

\begin{equation}
	C_{k|{\bf Y}}({\bf Y}) = \frac{P_{\mathcal H'_k|{\bf Y}}({\bf Y})}{\sum_{i\neq k}^{M_{\rm max}}P_{\mathcal H'_i|{\bf Y}}({\bf Y})} = \frac{P_{{\bf Y}|``M=k''}({\bf Y})P_{``M=k''}}{\sum_{i\neq k}^{M_{\rm max}}P_{{\bf Y}|``M=i''}({\bf Y})P_{``M=i''}} 
\end{equation}
which can be evaluated using the results derived above.

\section{Conclusion}
In this work, we introduced a general Bayesian framework for multi-antenna multi-source detection, which can be used to detect a single multi-antenna source emitter as well as multiple single antenna transmitters. This framework is based on a consistent treatment of the system information available at the sensing device. The performance of the novel Bayesian multi-signal source detector is derived and compared in simulations with the classical power detection technique. We observed in particular that the proposed Bayesian detector performs better than the classical energy detector for low false alarm rate constraints, and is capable of treating the problem of signal detection under imprecise or completely unknown signal-to-noise ratio.

\appendices
\section{Proof of Lemma \ref{le:1}}
\label{ap:A}
We will show the result by recursion. For $N=2$, the determinant is simply $(a_2-a_1)/b^2$ which is compliant with the formula. Assuming the result for $N-1$, we develop the determinant for dimension $N$ on the last column (that corresponding to $j=N-1$), to obtain

\begin{align}
	\sum_{i=1}^N(-1)^{N+i}\sum_{m=1}^{N-1}\frac{(-1)^{N-1+m}(N-1)!(N-2)!}{m!(N-1-m)!(m-1)!b^{N-1+m}}x_i^m\frac1{b^{(N-2)(N-1)}}\prod_{\substack{a<b \\ a\neq i\\ b\neq i}}(x_b-x_a)
\end{align}
which, after development of the Vandermonde determinant leads to

\begin{align}
	\sum_{i=1}^N(-1)^{N+i}\sum_{m=1}^{N-1}\frac{(-1)^{N-1+m}(N-1)!(N-2)!}{m!(N-1-m)!(m-1)!b^{N-1+m}}x_i^m\sum_{\sigma_i\in \mathcal P_i(N-1)}\sgn(\sigma_i)b^{-(N-2)(N-1)}\prod_{\substack{k=1\\ k\neq i}}^N x_k^{\sigma_i(k)-1}
\end{align}
where $\mathcal P_i(N-1)$ is the set of all functions that map $\{1,\ldots,i-1,i+1,\ldots,N\}$ onto $\{1,\ldots,N-1\}$ (this can be thought of as the set of permutations of $N-1$ with $i$ in input labelled by $N$).
Now notice that in the expression above, for $1\leq m\leq N-2$, one can exchange $x_i^m$ and $x_k^{\sigma_i(k)-1}$ for the unique $k$ such that $\sigma_i(k)-1=m$. However, this comes along with a change of the term $(-1)^{N+i}$ by $(-1)^{N+k}$ (hence a multiplication by $(-1)^{k-i}$) and a change of the signature of the `pseudo-permutation' $\sigma_i$ as $\sgn(\sigma_i) (-1)^{k-i-1}$ ($k-i-1$ being the number of steps needed to bring $k$ before $i+1$ by successive neighbor-permutations); the rest of the expression is not affected. Therefore, those two instances sum up to $0$ in the above expressions, for any $m\leq N-2$. This leads to consider only the term $N-1$ in the sum $\sum_{m=1}^{N-1}$. This reduces the above expression to

\begin{align}
	\sum_{i=1}^N(-1)^{N+i}b^{-N(N-1)} x_i^{N-1}\sum_{\sigma_i\in {\mathcal P}_i(N-1)}\sgn(\sigma_i)\prod_{\substack{k=1 \\ k\neq i}}^N x_k^{\sigma_i(k)-1} =
	\label{eq:lesum}	\sum_{i=1}^Nb^{-N(N-1)}\sum_{\sigma_i\in \bar{\mathcal P}_i(N)}\sgn(\sigma_i)\prod_{k=1}^N x_k^{\sigma_i(k)-1}
\end{align}
with $\bar{\mathcal P}_i(N)$ the set of permutations $\sigma$ of $N$ for which $\sigma(i)=N$ (which imposes a product by $(-1)^{N-i}$ to the previous signature and the collapse of the remaining $(-1)^{N+i}$ term). Now, ${\mathcal P}(N)$, the set of all permutations of $N$ satisfies

\begin{align}
	{\mathcal P}(N) &= \bar{\mathcal P}_i(N) \cup \left\{\sigma\in {\mathcal P}(N)| \sigma(i)\neq N \right\} \\
	&= \bar{\mathcal P}_i(N) \cup \bigcup_{k=1}^{N-1}\left\{\sigma\in {\mathcal P}(N)| \sigma(i)=k \right\}
\end{align}
the right-hand side member of the above equation is obtained by all the elements in the sum of \eqref{eq:lesum}. Hence the final result.

\end{document}